\documentclass[journal]{IEEEtran}

\usepackage{cite}
\usepackage{graphicx}
\usepackage{color}
\usepackage{amsmath, bm}

\DeclareMathOperator{\diag}{diag}
\newcommand{\norm}[1]{\left\lVert#1\right\rVert}
\newcommand{\abs}[1]{\left\lvert#1\right\rvert}

\usepackage{amsthm}
\theoremstyle{remark}
\newtheorem*{remark}{Remark}
\theoremstyle{definition}
\newtheorem{definition}{Definition}
\newtheorem{theorem}{Theorem}

\newtheorem{proposition}[theorem]{Proposition}

\begin{document}

\title{On Spatial Multiplexing Using Reconfigurable Intelligent Surfaces}

\author{Mohamed~A.~ElMossallamy,~\IEEEmembership{Student Member,~IEEE,}
        Hongliang~Zhang,~\IEEEmembership{Member,~IEEE,}
        Radwa~Sultan,~\IEEEmembership{Member,~IEEE,}
        Karim~G.~Seddik,~\IEEEmembership{Senior~Member,~IEEE,}
        Lingyang~Song,~\IEEEmembership{Fellow,~IEEE,}
        Geoffrey~Ye~Li,~\IEEEmembership{Fellow,~IEEE} and 
        Zhu~Han,~\IEEEmembership{Fellow,~IEEE} \vspace{-5mm}
        \thanks{Mohamed A. ElMossallamy and Zhu Han are with the Electrical and Computer Engineering Department, University of Houston, TX, USA. 
                Zhu Han is also with the Department of Computer Science and Engineering, Kyung Hee University, Seoul, South Korea (emails: m.ali@ieee.org, zhan2@uh.edu).}
        \thanks{Hongliang Zhang is with the Electrical Engineering Department, Princeton University, Princeton, New Jersey, USA (email: hongliang.zhang92@gmail.com).}
        \thanks{Radwa Sultan is with the Electrical and Computer Engineering Department, Manhattan College, Riverdale, NY, USA (email: rsultan02@manhattan.edu).}
        \thanks{Karim G. Seddik is with the Electronics and Communications Engineering Department, American University in Cairo, New Cairo, Egypt (email: kseddik@aucegypt.edu).}
        \thanks{Lingyang Song is with Department of Electronics, Peking University, Beijing, China (email: lingyang.song@pku.edu.cn).}
        \thanks{Geoffrey Ye Li is with the School of Electrical and Computer Engineering, Georgia Institute of Technology, Atlanta, GA, USA (email: liye@ece.gatech.edu).}
}

\maketitle

\begin{abstract}
 We consider an uplink multi-user scenario and investigate the use of reconfigurable intelligent surfaces (RIS) to optimize spatial multiplexing performance when a linear receiver is used. We study two different formulations of the problem, namely maximizing the effective rank and maximizing the minimum singular value of the RIS-augmented channel. We employ gradient-based optimization to solve the two problems and compare the solutions in terms of the sum-rate achievable when a linear receiver is used. Our results show that the proposed criteria can be used to optimize the RIS to obtain effective channels with favorable properties and drastically improve performance even when the propagation through the RIS contributes a small fraction of the received power.
\end{abstract}

\IEEEpeerreviewmaketitle

\section{Introduction}
{Performance of wireless communication systems} is ultimately dictated by the wireless channel {state}, which is the product of electromagnetic wave propagation in the environment, and outside the control of the system designer. Transceivers merely track the channel state, and then adapt their modulation and coding {to better utilize it for a given coherence interval.} Reconfigurable intelligent surfaces (RIS) present a paradigm shift in this aspect \cite{basar_wireless_2019, basar_reconfigurable_2020, elmossallamy_reconfigurable_2020, huang_holographic_2020}, as it allows the system designer to also alter the wireless channel realization to increase capacity or impose favorable structures to facilitate simpler communication techniques.

RISs have been used to alter the channel to achieve various objectives in many scenarios \cite{huang_reconfigurable_2019, wu_joint_TWC_2019, nadeem_intelligent_2019, guo_weighted_2019, huang_reconfigurable_2020, yang_irs_meets_ofdm_2019, taha_enabling_2019, hu_reconfigurable_2019, di_hybrid_2020, nadeem_asymptotic_2020}. Downlink multi-user scenarios has been studied in \cite{huang_reconfigurable_2019, wu_joint_TWC_2019, nadeem_intelligent_2019, guo_weighted_2019, huang_reconfigurable_2020}, where the use of an RIS resulted in impressive gains both in terms of energy efficiency \cite{huang_reconfigurable_2019, wu_joint_TWC_2019} and achievable rates \cite{guo_weighted_2019, huang_reconfigurable_2020}. In \cite{wu_joint_TWC_2019, nadeem_intelligent_2019}, it has been demonstrated that an RIS-assisted MIMO system can achieve the same rate performance as massive MIMO systems while significantly reducing the number of required radio frequency (RF) chains. Gains in the rate performance have also been reported for wide-band OFDM systems \cite{yang_irs_meets_ofdm_2019, taha_enabling_2019}. Furthermore, it has been shown in \cite{hu_reconfigurable_2019}, using both simulations and experimental prototypes, that an RIS can be exploited to greatly improve RF sensing for human posture recognition.

In this letter, we investigate the use of an RIS to improve the performance in a multi-user uplink scenario \emph{when a linear receiver is used}. Although suboptimal in general, linear receivers are becoming increasingly attractive as larger and larger constellations are being incorporated in wireless standards. For example, the current generation of wireless local area network (WLAN) modems supports 1024-QAM constellations with 4096-QAM constellations being proposed for the next generation \cite{schelstraete_4KQAMfeasbility_2019}, which renders maximum-likelihood (ML) receivers prohibitive. We present two optimization criteria for the RIS phase shifts to improve the post-processing signal-to-interference-plus-noise ratio (SINR) of a linear receiver, and ensure fairness between users. Different from alternating optimization techniques in the literature, e.g., \cite{wu_joint_TWC_2019}, we propose to optimize features of the effective channel matrix directly. In particular, maximizing the so-called effective rank \cite{roy_effective_2007} and maximizing the minimum singular value of the RIS-augmented channel. We derive the gradient of the two criteria with respect to the RIS phase shifts and employ gradient-based optimization techniques to obtain good RIS configurations. We compare the configurations obtained in terms of rate achievable by a linear receiver and examine the characteristics of the resulting RIS-augmented channels. Finally, we investigate the effects of changing system parameters such as the number of RIS elements and the fraction of power received through the RIS. Our results show that by optimizing the RIS phase shifts using the proposed criteria, the rate achievable by a linear receiver is drastically increased even when the propagation through the RIS contributes a small fraction of the received power.

\section{System Model}
\begin{figure}[t] 
  \includegraphics[height=6cm,clip]{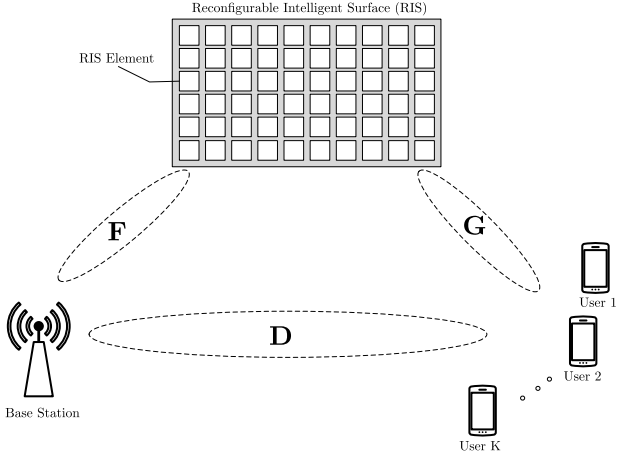} 
  \centering
  \caption{System Model.} 
  \label{fig:sysmodel}
\end{figure}
We consider an uplink multi-user scenario with one base-station having $M$ antennas and $K$ single-antenna users, where $M \ge K$. The link is assisted by a single $L$-element reflectarray-based RIS \cite{elmossallamy_reconfigurable_2020}. The baseband received signal at the base station can be written as
\begin{equation}\label{eq:rcvdsigVec}
\begin{split}
    \mathbf{y} = \sum_{k = 1}^K \mathbf{h}_k x_k + \mathbf{n},
\end{split}
\end{equation}
where $\mathbf{h}_k$ is the $M \times 1$ channel vector from the $k$-th user to the base-station including the effects of the RIS, $x_k$ is the transmitted scalar symbol by the $k$-th user, and $\mathbf{n}$ is the $M \times 1$ noise vector at the base-station whose elements are distributed as i.i.d. $\mathcal{CN}\left(0,\sigma^2_n\right)$. The received signal can be written in matrix form as 
\begin{equation}\label{eq:rcvdlambdat}
\begin{split}
    \mathbf{y} = \mathbf{H}_{\textup{eff}} \mathbf{x} + \mathbf{n},
\end{split}
\end{equation}
where $\mathbf{H}_{\textup{eff}}=\left[ \mathbf{h}_1 \mathbf{h}_2 \dots \mathbf{h}_K \right]$ is the overall effective channel including the effects of the RIS and $\mathbf{x}=\left[ x_1, x_2, \dots, x_K \right]$.

By changing the RIS configuration, the effective channel between the base station and the users, $\mathbf{H}_{\textup{eff}}$ can be altered. Assuming a reflectarray-based RIS \cite{elmossallamy_reconfigurable_2020}, we can write the RIS-assisted channel as
\begin{equation}\label{eq:risChannel_dbc}
    \begin{split}
        \mathbf{H}_{\textup{eff}} = \sqrt{1-\alpha} ~\mathbf{D} + \frac{\sqrt{\alpha}}{\sqrt{L}} \mathbf{F} \mathbf{Q} \mathbf{G}, 
    \end{split}
\end{equation}
where $\mathbf{D}$ denote the $M \times K$ direct, i.e., {not controllable by the RIS}, channel between the users and the base-station, $\alpha \in \left[0, 1\right]$ denote the fraction of power received through the RIS\footnote{The value of the parameter $\alpha$ will depend on various parameters such as the deployment scenario, and the physical size of the RIS. Note that the normalization by $\frac{1}{\sqrt{L}}$ is necessary for $\alpha$ to represent the fraction of power received through the RIS.}, $\mathbf{F}$ denotes the $M \times L$ channel between the base station and the RIS, $\mathbf{G}$ denotes the $L \times K$ channel between the RIS and the $K$ users, and $\mathbf{Q}$ denote the RIS controllable interaction matrix. We assume the elements of all channels' matrices are distributed as i.i.d. $\mathcal{CN}\left(0, 1\right)$, i.e., Rayleigh fading\footnote{It is worth mentioning that obtaining channel knowledge at the passive RIS array is a challenging problem, but outside the scope of the current work.}.

Assuming no coupling between the RIS antenna elements, the interaction matrix can be written as
\begin{equation}\label{eq:interactionMat_DBC}
\begin{split}
\mathbf{Q} = \diag \left( \psi_1, \psi_2 , \dots, \psi_L \right),
\end{split}
\end{equation}
where $\psi_i$ represents the reflection coefficient of the $i$-th element. We assume $\abs{\psi_i} = 1~\forall i$ since the RIS does not possess any amplification capabilities and can only phase shift the incident signals.

\section{Optimization Criteria for RIS-assisted Spatial Multiplexing}
In this section, we present two different criteria to choose the RIS configuration to optimize spatial multiplexing performance when a linear receiver is used. In general, when a linear equalizer, $\mathbf{W}$, is used, the resulting post-equalization SINR of the $k$-th user is given by
\begin{equation}\label{eq:sinr}
\gamma_k = \frac{\abs{\mathbf{w}^*_k \mathbf{h}_k}^2}{\sum\limits_{j \neq k} \abs{\mathbf{w}^*_k \mathbf{h}_j}^2 + K \sigma_n^2 \norm{\mathbf{w}_k}^2},
\end{equation}
where $\mathbf{w}^*_k$ is the $k$-th row of $\mathbf{W}$. The optimal \emph{linear} receiver that maximizes this ratio is the minimum mean square error (MMSE) receiver given by 
\begin{equation}\label{eq:mmse_rx}
\mathbf{W}_{\textrm{MMSE}} = \left( \mathbf{H}_{\textup{eff}}^* \mathbf{H}_{\textup{eff}}+ \sigma_n^2 \mathbf{I} \right)^{-1} \mathbf{H}_{\textup{eff}}^*,
\end{equation}
where $\left( \cdot \right)^*$ denotes the Hermitian transpose and $\mathbf{I}$ denotes the identity matrix of suitable dimensions. The MMSE receiver strikes a balance between minimizing the interference terms, i.e., $\sum_{j \neq k} \abs{\mathbf{w}^*_k \mathbf{h}_j}^2$, and maximizing the signal term $\abs{\mathbf{w}^*_k \mathbf{h}_k}^2$.

Aside from choosing the receiver structure, now with the advent of RIS, we also have the capability to change the channel realization. A natural question to ask is \emph{for which channel realization is the linear MMSE receiver equivalent to the optimal ML receiver?} the answer is: when the channel is orthogonal, i.e., $\mathbf{H}_{\textup{eff}}^* \mathbf{H}_{\textup{eff}}$ is diagonal, and no stream carries any information about other streams. Also note that in this case power control simplifies considerably, as it becomes optimal for each user to just transmit at full power. Furthermore, in this case, the interference terms $\sum_{j \neq k} \abs{\mathbf{w}^*_k \mathbf{h}_j}^2$ vanish and the MMSE receiver reduces to the matched filter  $\mathbf{W}_{\textup{MF}} = \mathbf{H}_{\textup{eff}}^*$.\footnote{The matched filter receiver could be practically attractive in envisioned massive connectivity scenario where the high dimensionality of the channel matrix makes its inversion prohibitive.} The problem of finding the phase shifts that orthogonalize the effective channel, $\mathbf{H}_{\textup{eff}}$, can be written in the form of the feasibility problem 
\begin{equation}
\begin{aligned}
& \text{find}
& & \boldsymbol{\psi} \\
& \text{subject to}
& & \mathbf{h}^*_i \mathbf{h}_j  = 0, \; i \neq j,
\end{aligned}
\end{equation}
where $\boldsymbol{\psi} = \left[ \psi_1, \psi_2, \dots, \psi_L \right]$. This problem is equivalent to solving a system of multi-variate quadratics, which is known to be NP-complete \cite{garey_computers_1979}. Nevertheless, we present two optimization criteria that could be used to approximately orthogonalize the effective channel.\footnote{{Note that this formulation is motivated by simplicity of detection, power control and fairness between users but is inherently sub-optimal from a sum-rate stand-point.}}

\subsection{Effective Rank Criterion (ER-C)}
\begin{definition}
The effective rank \cite{roy_effective_2007,hougne_optimally_2019} of a complex-valued $M \times K$ matrix $\mathbf{X}$ is given by
\begin{equation}\label{eq:effectiveRank}
\xi\left(\mathbf{X}\right) = \exp{\left[- \sum_{i} \frac{\lambda_i}{\norm{\mathbf{X}}_{*}} \ln{\left(\frac{\lambda_i}{\norm{\mathbf{X}}_{*}}\right)}\right]},
\end{equation}
where $\lambda_i$ is the $i$-th largest singular value of $\mathbf{X}$ and $\norm{\mathbf{X}}_{*}=\sum\limits_{i} \lambda_i$ is the nuclear norm of $\mathbf{X}$. The term in the exponent is known as the \emph{spectral entropy} and the density defined by $\left\lbrace\frac{\lambda_i}{\norm{\mathbf{X}}_{*}}\right\rbrace_{\forall i}$ is known as the \emph{spectral density}.
\end{definition} 
\noindent 

The effective rank takes values in the range $\left[ 0, \min{\left( M, K \right)} \right]$, where the higher {its} value the more equal the singular values, and thus the more orthogonal the matrix columns are. A matrix with maximum effective rank, i.e., $\min{\left( M, K \right)}$, has completely orthogonal columns and all its singular values are equal. Hence, a proper optimization objective to achieve our goal is to maximize the effective rank of the effective channel matrix, i.e.,
\begin{equation}\label{prob:effective_rank}
\begin{aligned}
\underset{\psi_{\ell}}{\text{maximize}} \quad & \xi \left( \mathbf{H}_{\textup{eff}} \right)\\
\text{subject to} \quad & {\abs{\psi_{\ell}} = 1}~\forall {\ell} = 1, 2, \dots, L.\\
\end{aligned}
\end{equation}
Note that the effective rank is not a function of the singular values directly but their normalized form $\frac{\lambda_i}{\norm{\mathbf{X}}_{*}}$. 

We can write the interaction matrix directly in terms of the phase shifts \cite{huang_reconfigurable_2019} as $\mathbf{Q} = \diag \left(  \mathrm{e}^{i \theta_1}, \mathrm{e}^{i \theta_2}, \dots, \mathrm{e}^{i \theta_L} \right)$ to reformulate~\eqref{prob:effective_rank} as the unconstrained problem 
\begin{equation}\label{prob:effective_rank_unc}
\begin{aligned}
\underset{\boldsymbol{\theta}}{\text{maximize}} \quad & \xi \left( \mathbf{H}_{\textup{eff}} \right)
\end{aligned}
\end{equation}
where $\boldsymbol{\theta} = \left[ \theta_1, \theta_2, \dots, \theta_L \right]$. Although {not convex}, this problem can be efficiently solved using gradient-based techniques to {find a locally optimal solution.} To facilitate gradient-based optimization, we derive the gradient of the effective rank of an RIS-augmented channel with respect to the phase shifts. 

Using the chain rule, we can write the $\ell$-th element of the gradient as 
\begin{equation}\label{eq:effRankGrad1}
\frac{\partial \xi}{\partial \mathbf{\theta_{\ell}}} = 
\begin{array}{@{}c@{}}
\begin{bmatrix}\frac{\partial \lambda_1}{\partial \mathbf{\theta_{\ell}}} & \frac{\partial \lambda_2}{\partial \mathbf{\theta_{\ell}}} & \cdots &\frac{\partial \lambda_K}{\partial \mathbf{\theta_{\ell}}}\end{bmatrix}
\end{array}
\begin{bmatrix}
\frac{\partial \xi}{\partial \lambda_1}\\ \\
\frac{\partial \xi}{\partial \lambda_2}\\ \\
\vdots \\ \\
\frac{\partial \xi}{\partial \lambda_K}
\end{bmatrix}.
\end{equation}
The partial derivative of the effective rank with respect to the $k$-th singular value can be found by directly differentiating~\eqref{eq:effectiveRank} to be
\begin{equation}\label{eq:parReff_parLambda}
    \begin{split}
        \frac{\partial \xi}{\partial \lambda_k} = - \sum_{j=1}^{K} \frac{C_{j, k}}{\norm{\mathbf{H}_{\textup{eff}}}_{*}^2} \left( 1 + \ln{\frac{\lambda_j}{\norm{\mathbf{H}_{\textup{eff}}}_{*}}} \right) ~~ \xi\left(\mathbf{H}_{\textup{eff}}\right), 
    \end{split}
\end{equation}
where 
\begin{equation}\label{eq:parReff_parLambda2}
C_{j, k} = 
  \begin{cases}
    \sum\limits_{i \neq k} \lambda_i & \text{if } j = k,\\
    - \lambda_j                   & \text{if } j \neq k.
  \end{cases}
\end{equation}
\begin{proposition}
Let  $\mathbf{H}_{\textup{eff}}=\mathbf{U^{*} S V} $ denote the singular value decomposition (SVD) of $\mathbf{H}_{\textup{eff}}$. Then, the partial derivative of the $k$-th singular value with respect to the $\ell$-th phase shift can be found to be 
\begin{equation}\label{eq:risChannel_lambda}
        \frac{\partial \lambda_k}{\partial \mathbf{\theta_{\ell}}} =  \mathbf{u}^{*}_k ~\frac{\partial \mathbf{H}_{\textup{eff}}}{\partial \theta_{\ell}} ~ \mathbf{v}_k,
\end{equation}
where $\mathbf{v}_k$ and $\mathbf{u}_k$ are the $k$-th columns of $\mathbf{U}$ and $\mathbf{V}$, respectively.
\end{proposition}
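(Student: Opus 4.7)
The plan is to obtain the formula from the defining identities of the SVD together with the unit-norm constraints on the singular vectors, exploiting that $\theta_\ell$ is a real parameter while $\lambda_k$ is real-valued. The starting point is the pair of SVD identities $\mathbf{H}_{\textup{eff}}\,\mathbf{v}_k = \lambda_k \mathbf{u}_k$ and $\mathbf{u}_k^{*}\,\mathbf{H}_{\textup{eff}} = \lambda_k \mathbf{v}_k^{*}$. Multiplying the first on the left by $\mathbf{u}_k^{*}$ and using $\mathbf{u}_k^{*}\mathbf{u}_k = 1$ gives the scalar identity
\[
\lambda_k \;=\; \mathbf{u}_k^{*}\,\mathbf{H}_{\textup{eff}}\,\mathbf{v}_k,
\]
which is the quantity I would then differentiate.

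Next I would differentiate both sides with respect to $\theta_\ell$ by the product rule, producing three terms, the middle of which is exactly the right-hand side of the claim. Substituting the two SVD identities above collapses the other two terms to $\lambda_k\,(\partial_{\theta_\ell}\mathbf{u}_k^{*})\,\mathbf{u}_k$ and $\lambda_k\,\mathbf{v}_k^{*}\,(\partial_{\theta_\ell}\mathbf{v}_k)$, respectively. Differentiating the normalizations $\mathbf{u}_k^{*}\mathbf{u}_k = 1$ and $\mathbf{v}_k^{*}\mathbf{v}_k = 1$ shows that each of these two scalars equals its own negative conjugate, and is therefore purely imaginary. Since $\lambda_k$ is real, its derivative is real, so only the real part of the middle term can survive; fixing the standard phase convention for the singular-vector pair (always available when $\lambda_k$ is simple) makes the scalar $\mathbf{u}_k^{*}\,\partial_{\theta_\ell}\mathbf{H}_{\textup{eff}}\,\mathbf{v}_k$ itself real and reproduces the displayed identity.

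The main obstacle I anticipate is precisely this complex-valued subtlety: the chain rule naively leaves imaginary residuals coming from the derivatives of the singular vectors, and one must either take a real part explicitly or invoke a phase normalization. Simplicity of $\lambda_k$, needed to make the singular-vector pair well-defined up to a single global phase, holds with probability one under the i.i.d.\ Rayleigh model for $\mathbf{D}$, $\mathbf{F}$ and $\mathbf{G}$, so the formula applies without further qualification in the setting of the paper. The remaining computation, expressing $\partial\mathbf{H}_{\textup{eff}}/\partial\theta_\ell$ via $\mathbf{F}$, $\mathbf{G}$, and the dependence of $\mathbf{Q}$ on $\theta_\ell$, is immediate because only the $\ell$-th diagonal entry of $\mathbf{Q}$ depends on $\theta_\ell$, and this plugs directly into the chain rule for the effective rank.
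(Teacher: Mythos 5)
Your proposal is correct and follows essentially the same route as the paper's own proof: differentiate $\lambda_k = \mathbf{u}_k^{*}\mathbf{H}_{\textup{eff}}\mathbf{v}_k$, collapse the outer terms via the SVD identities, and kill them using the unit-norm constraints on the singular vectors. Your explicit handling of the purely imaginary residuals (and the phase convention needed to drop the real-part operator from the final formula) is a small but welcome refinement of the paper's argument, which keeps $\Re\{\cdot\}$ in the proof but omits it in the stated result.
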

\begin{proof}
The $k$-th singular value is given by
\begin{equation}\label{eq:lemma1_proof1}
        \lambda_k = \mathbf{u}^{*}_k ~ \mathbf{H}_{\textup{eff}} ~ \mathbf{v}_k,
\end{equation}
Then taking the partial derivative with respect to $\theta_{\ell}$ yields
\begin{equation}\label{eq:lemma1_proof2}
\begin{split}
            \frac{\partial \lambda_k}{\partial \theta_{\ell}} &= \Re{\lbrace \frac{\partial \mathbf{u}^*_k}{\partial \theta_l}  \mathbf{H}_{\textup{eff}} ~ \mathbf{v}_k + \mathbf{u}^*_k  \frac{\partial \mathbf{H}_{\textup{eff}}}{\partial \theta_{\ell}} ~ \mathbf{v}_k + \mathbf{u}^*_k  \mathbf{H}_{\textup{eff}} ~ \frac{\partial \mathbf{v}^*_k}{\partial \theta_l}\rbrace}, \\
            \frac{\partial \lambda_k}{\partial \theta_{\ell}} & \overset{(a)}{=} \Re{\lbrace  \lambda_k \frac{\partial \mathbf{u}^*_k}{\partial \theta_l}  \mathbf{u}_k  + \mathbf{u}^*_k  \frac{\partial \mathbf{H}_{\textup{eff}}}{\partial \theta_{\ell}} ~ \mathbf{v}_k + \lambda_k \mathbf{v}^*_k \frac{\partial \mathbf{v}_k}{\partial \theta_l}\rbrace} \\
            \frac{\partial \lambda_k}{\partial \theta_{\ell}} & \overset{(b)}{=} \Re{\lbrace \mathbf{u}^*_k  \frac{\partial \mathbf{H}_{\textup{eff}}}{\partial \theta_{\ell}} ~ \mathbf{v}_k\rbrace},
\end{split}
\end{equation}
where (a) follows from the fact that $\mathbf{H}_{\textup{eff}} \mathbf{v}_k = \lambda_k \mathbf{u}_k$ and $\mathbf{H}_{\textup{eff}}^* \mathbf{u}_k = \lambda_k \mathbf{v}_k$, while (b) follows from the fact that $\mathbf{u}_k^* \mathbf{u}_k = \mathbf{v}_k^* \mathbf{v}_k = 1 $.
\end{proof}
The $\left(m, k\right)$-th entry of effective channel matrix $\mathbf{H}_{\textup{eff}}$ can be written as
\begin{equation}\label{eq:risChannel_elem}
    \begin{split}
        \left[\mathbf{H}_{\textup{eff}}\right]_{m, k}  &=  \sqrt{1-\alpha} \left[\mathbf{D}\right]_{m, k} + \frac{\sqrt{\alpha}}{\sqrt{L}} \sum_{\ell=1}^{L} \mathrm{e}^{\mathrm{i}\theta_{\ell}} \left[\mathbf{F}\right]_{m, \ell} \left[\mathbf{G}\right]_{\ell, k},
    \end{split}
\end{equation}
whose partial derivative with respect to the $\ell$-th phase shift is given by
\begin{equation}\label{eq:risChannel_elemD}
    \begin{split}
        \left[ \frac{\partial \mathbf{H}_{\textup{eff}}}{\partial \theta_{\ell}}\right]_{m, k}  &=  \frac{\sqrt{\alpha}}{\sqrt{L}} \mathrm{e}^{\mathrm{i} \left( \theta_{\ell} + \frac{\pi}{2} \right)} \left[\mathbf{F}\right]_{m, \ell} \left[\mathbf{G}\right]_{\ell, k}. 
    \end{split}
\end{equation}
Hence, the partial derivative of the entire effective channel matrix with respect to the $\ell$-th phase shift can be written as
\begin{equation}\label{eq:risChannel_matD}
    \begin{split}
        \frac{\partial \mathbf{H}_{\textup{eff}}}{\partial \theta_{\ell}}  &= \frac{\sqrt{\alpha}}{\sqrt{L}} \mathrm{e}^{\mathrm{i} \left( \theta_{\ell} + \frac{\pi}{2} \right)} \left[\mathbf{F}\right]_{:, \ell} \otimes \left[\mathbf{G}\right]_{\ell, :}, 
    \end{split}
\end{equation}
where $\otimes$ denote the Kronecker product operation. Finally, by substituting~\eqref{eq:parReff_parLambda}~and~\eqref{eq:risChannel_elemD} back into~\eqref{eq:effRankGrad1}, we can compute the gradient.

\subsection{Max-Min Singular Value Criterion (MSV-C)}
The minimum singular value of the channel matrix is paramount in determining the performance of linear receivers \cite{heath_antenna_2001}. Both capacity and error rate metrics are directly related to the minimum singular value which motivates our interest in this optimization criterion. Our numerical results show that, for an adequate size of the RIS and path loss values, maximizing the minimum singular value leads to the effective channel being completely orthogonalized. Solutions obtained by maximizing the minimum singular value also tend to have higher beamforming gains which lead to much better performance as compared to solutions obtained by maximizing the effective rank. 

Following the same arguments as in the last section, we can formulate the optimization as an unconstrained maximization problem with respect to the phase shifts as 
\begin{equation}\label{prob:min_singular}
\begin{aligned}
\underset{\boldsymbol{\theta}}{\text{maximize}} \quad & \lambda_{min} \left( \mathbf{H}_{\textup{eff}} \right),
\end{aligned}
\end{equation}
and the elements of the gradient of the minimum singular value with respect to the RIS phase shifts were already derived in~\eqref{eq:risChannel_lambda}. {Note that~\eqref{prob:min_singular}, like~\eqref{prob:effective_rank_unc}, is non-convex.}

Equipped with closed-form expressions for the gradient, both \eqref{prob:effective_rank_unc} and \eqref{prob:min_singular} can be solved efficiently using any gradient-based optimization algorithm to find a locally optimal solution. In our experiments, we have found the effective rank problem \eqref{prob:effective_rank_unc}, to be especially amicable to gradient-based optimization. Even the simplest steepest-descent algorithms can used to obtain an optimal solution, i.e., $\xi = \min{\left( M, K \right)}$. While, the max-min singular value problem, \eqref{prob:min_singular}, was found to benefit from incorporating curvature information using Quasi-Newton methods. In all cases, our numerical results presented in the next section show that the proposed problems are easily solvable using efficient numerical techniques and result in drastic improvements in the performance.

\section{Numerical Results}

\begin{figure}[t] 
  \includegraphics[height=7cm,trim={0.5cm 0 0.5cm 0},clip]{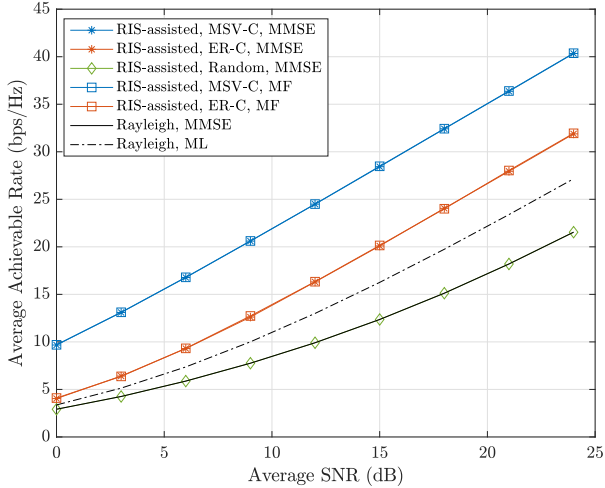}
  \centering
  \caption{Comparison of the achievable rate when the phase shifts are optimized according to MSV-C and ER-C. $\alpha = 0.5$ and $L=100$ for all curves. The achievable rate of the canonical Rayleigh channel is also shown for the ML and MMSE receivers as baselines.} 
  \label{fig:1}
\end{figure}

Equipped with the gradient expressions derived in the last section, we employ the Quasi-Newton Broyden-Fletcher-Goldfarb-Shanno (BFGS) algorithm with cubic interpolation line search \cite{nocedal_numerical_2006} to solve both \eqref{prob:effective_rank_unc} and \eqref{prob:min_singular}. Then, we compare both solutions in terms of the average achievable sum-rate attained. The achievable sum-rate with a linear receiver for a given channel realization can be computed from
\begin{equation}\label{eq:achievablerate}
R = \sum_{k = 1}^{K} \log_2{\left( 1 + \gamma_k \right)},
\end{equation}
where $\gamma_k$'s are as defined in~\eqref{eq:sinr}.
{In all simulated cases, we let $M=K=4$ and indicate the number of RIS elements, $L$, and the power fraction, $\alpha$ on the figures.}   Achievable rates are averaged over $10^6$ independent channel realizations. 

Fig.~\ref{fig:1} shows the attained average achievable rates for the RIS-assisted channels for both criteria and also for the case of a non-assisted canonical Rayleigh channel as a baseline. For all curves in the figure, $\alpha= 0.5$ and $L= 100$. From the figure, both optimization criteria lead to drastic improvements in the achievable rate over the canonical Rayleigh channel. Actually, the rate achieved with a linear MMSE receiver in the RIS-assisted channel\textemdash for both optimization criteria\textemdash exceeds even the ML receiver in a non-assisted channel. Moreover, the MF receiver performance is identical to the optimal MMSE receiver performance for both criteria which means that the effective RIS-assisted channel matrix is completely orthogonal in this case. \emph{This shows that the asymptotic orthogonality of massive MIMO can be achieved by a passive RIS without requiring a large number of active RF chains.} As expected, the MSV-C holds a significant advantage over the ER-C in terms of the achievable rate. Although both criteria lead to an orthogonal effective channel, the MSV-C incentivizes larger singular values, while the ER-C only cares about orthogonality, i.e., the singular values being equal. 

\begin{remark}
Note that the channel matrix is normalized such that the RIS assistance does not lead to increasing the average power at the receiver. All the achieved gains come from co-phasing at the RIS, i.e., beamforming, and the improved channel eigenstructure. To highlight this point,  Fig.~\ref{fig:1} also shows the achievable rate when the phase shifts are chosen randomly. In this case, all gains vanish and RIS-assisted channel holds no advantage over the canonical Rayleigh channel which underscores the importance of optimizing the phase shifts. {Our rationale for this normalization is to highlight the gains arising from the configurability of the RIS rather than the introduction of another large scatterer into the environment since the RIS will probably be installed into an already existing structure, e.g., building facade or wall.}
\end{remark}

\begin{figure}[t] 
  \includegraphics[height=7cm, trim={0.5cm 0 0.5cm 0},clip]{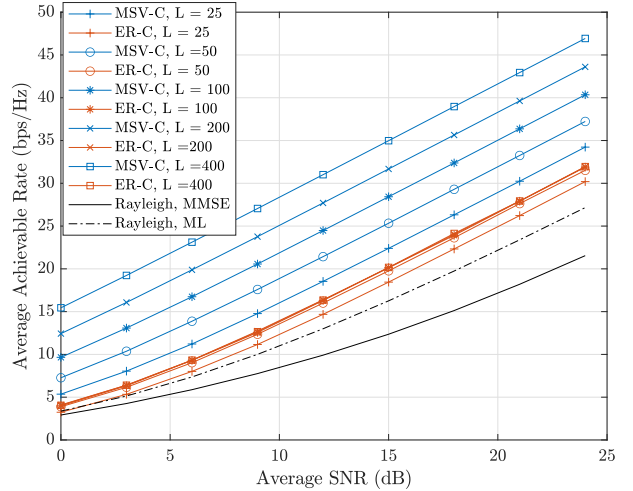} 
  
  \centering
  \caption{The effect of the number of RIS elements, $L$, on the achievable rate when the phase shifts are optimized according to MSV-C and ER-C. $\alpha=0.5$ for all curves. The achievable rate of the canonical Rayleigh channel is also shown for the ML and MMSE receivers as baselines.} 
  \label{fig:2}
\end{figure}

Now, we investigate the effects of the number of elements, $L$, and how much of the received power came through backscattering off the RIS, i.e., $\alpha$. Fig.~\ref{fig:2} shows the effect of the number of elements on the achievable rates for the RIS-assisted channels for both criteria and also for the case of a non-assisted canonical Rayleigh channel as a baseline. {For all curves in the figure, $\alpha= 0.5$.} From the figure, increasing the number of RIS elements leads to significant improvements in the achievable rate when the MSV-C is used to configure the phase shifts. However, when the ER-C is used, the effect is negligible. This stems from the fact the ER-C is channel gain agnostic and only cares about eliminating interference; hence, it does not leverage the increased beamforming capability that comes with more elements. It is worth mentioning that increasing the number of RIS elements entails increasing the physical size of the RIS since we are assuming the elements of $\mathbf{F}$ and $\mathbf{G}$ remain independent.

Finally, Fig.~\ref{fig:3} shows the effect of the power fraction, $\alpha$, on the achievable rates for the RIS-assisted channels for both criteria and also for the case of a non-assisted canonical Rayleigh channel as a baseline. {For all curves in the figure, $L = 100$.}  As expected, if more power is received through the RIS, it has a stronger influence on the effective channel and the achievable rate increases for both criteria.  Surprisingly, the RIS can have a potent effect on the effective channel even when propagation through it contributes little received power. Even for power fraction factors as low as $\frac{1}{32}$, the RIS can still influence the effective channel and drastically improve the achievable rate by up to 10 bits per channel use. 

\begin{figure}[t] 
  \includegraphics[height=7cm,trim={0.5cm 0 0.5cm 0},clip]{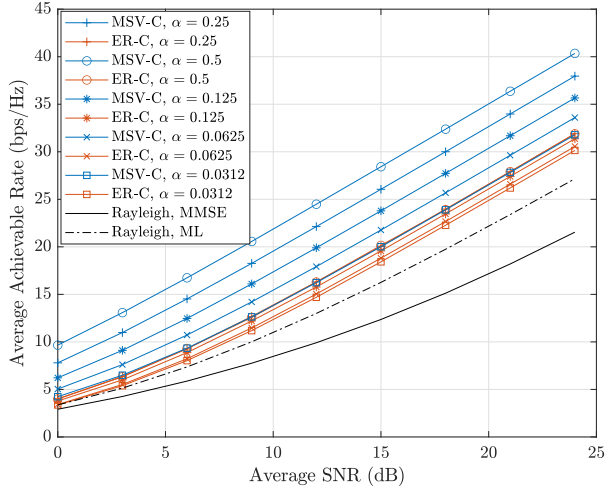} 
  
  \centering
  \caption{The effects of the power fraction, $\alpha$, on the achievable rate when the phase shifts are optimized according to MSV-C and ER-C. $L=100$ for all curves. The achievable rate of the canonical Rayleigh channel is also shown for the ML and MMSE receivers as baselines.} 
  \label{fig:3}
\end{figure}

\section{Conclusion}
We have presented two optimization criteria for the RIS phase shifts to improve the post-processing SINR of a linear receiver. In particular, maximizing the so-called effective rank and maximizing the minimum singular value of the RIS-augmented channel. We have derived the gradients of the two criteria with respect to the RIS phase shifts and employed gradient-based optimization techniques to obtain good RIS configurations. We have compared the configurations obtained in terms of rate achievable by a linear receiver and examined the characteristics of the resulting RIS-augmented channels. Finally, we investigated the effects of changing system parameters such as the number of RIS elements and the fraction of power received through the RIS. Our results showed that by optimizing the RIS phase shifts using the proposed criteria, the rate achievable by a linear receiver is drastically increased even when the propagation through the RIS contributes a small fraction of the received power.

\vspace{24pt}

\bibliographystyle{IEEEtran}
\bibliography{main}

\end{document}